\newcommand*\patchAmsMathEnvironmentForLineno[1]{%
  \expandafter\let\csname old#1\expandafter\endcsname\csname #1\endcsname
  \expandafter\let\csname oldend#1\expandafter\endcsname\csname end#1\endcsname
  \renewenvironment{#1}%
     {\linenomath\csname old#1\endcsname}%
     {\csname oldend#1\endcsname\endlinenomath}}%
\newcommand*\patchBothAmsMathEnvironmentsForLineno[1]{%
  \patchAmsMathEnvironmentForLineno{#1}%
  \patchAmsMathEnvironmentForLineno{#1*}}%
\newcommand{\KK}{\mathbb{K}}
\newcommand{\RR}{\mathbb{R}}
\renewcommand{\SS}{\mathbb{S}}
\newcommand{\KKnew}{\mathbb{K}_{\rm new}}
\newcommand{\calKnew}{\mathcal{K}_{\rm new}}
\newcommand{\Gnew}{G_{\rm new}}
\newcommand{\Tnew}{T_{\rm new}}
\newcommand{\calK}{\mathcal{K}}
\newcommand{\ignore}[1]{}
\newcommand{\freespace}{\mathbb{F}}
\newcommand{\configspace}{\mathbb{X}}
\newcommand{\cycle}{\mathit{cycle}}
\def\find{\mbox{\sc Find}}
\def\findext{\mbox{\sc FindExt}}
\def\union{\mbox{\sc Union}}
\def\unionext{\mbox{\sc UnionExt}}
\def\makeset{\mbox{\sc MakeSet}}
\def\rank{\mathit{rank}}
\def\parent{\mathit{parent}}
\def\parity{\mathit{parity}}
\newcommand\CR{\mbox{\tt cr}_2}		  
\def\DEF#1{\textbf{\emph{#1}}}
\newtheorem{theorem}{Theorem}
\newtheorem{lemma}[theorem]{Lemma}
\title{Semi-dynamic connectivity in the plane}
\author{
	Sergio Cabello\thanks{Department of Mathematics, IMFM, and
			Department of Mathematics, FMF, University of Ljubljana, Slovenia.
			\texttt{sergio.cabello@fmf.uni-lj.si}.
        	Supported by the Slovenian Research Agency, program P1-0297.}
\and
	Michael Kerber\thanks{Max-Planck-Institut f\"ur Informatik, Saarbr\"ucken, Germany.
                              \texttt{mkerber@mpi-inf.mpg.de}.
                              Supported by the Max Planck Center for Visual Computing and Communication.}
}
\date{\today}
\begin{document}
\maketitle

\begin{abstract}
	Motivated by a path planning problem we consider the following procedure.
	Assume that we have two points $s$ and $t$ in the plane and take $\calK=\emptyset$.
	At each step we add to $\calK$ a compact convex set that does not
	contain $s$ nor $t$.
	The procedure terminates when the sets in $\calK$ separate $s$ and $t$.
	We show how to add one set to $\calK$
	in $O(1+k\alpha(n))$ amortized time plus the time needed 
	to find all sets of $\calK$ intersecting the newly added set,
	where $n$ is the cardinality of $\calK$,
	$k$ is the number of sets in $\calK$ intersecting the newly added set, and 
	$\alpha(\cdot)$ is the inverse of the Ackermann function.
\end{abstract}

\section{Introduction}

Consider the \emph{path planning problem} from robotics, 
also known as the \emph{piano mover's problem}~\cite{lavalle}~\cite[Ch.13]{dutchbook}:
Given an initial and a target configuration of a robot, the task
is to decide whether the robot can move from the initial to the target
configuration without colliding with itself or a surrounding object
(and to find such a transformation if it exists).
The problem is typically tackled by setting up a \DEF{configuration space} $\configspace$
where every robot position is encoded as a single point.
Then $\configspace$ is partitioned into a \DEF{free space} $\freespace\subseteq\configspace$ 
of allowed configurations and its complement 
$\bar{\freespace}= \configspace\setminus\freespace$ denoting configurations
that collide with obstacles. The initial and final state are denoted by two points $s$
and $t$ in $\freespace$, and the task is to decide whether $s$ and $t$ are in the same
path-connected component of $\freespace$.

The following approach to solve the path planning problem
is discussed by Wang, Chiang and Yap~\cite{wcy-soft}. 
Assume for simplicity
that the configuration space $\configspace$ is a unit cube in $\RR^d$.
For any given subcube, which we call \DEF{box} from now, we can decide whether the box 
is entirely contained in $\freespace$, entirely contained in $\bar{\freespace}$,
or both contains points of $\freespace$ and $\bar{\freespace}$. We color a box
green, red, or yellow, respectively, depending on the predicates outcome.
Now, starting with the entire $\configspace$, we build a quadtree structure
and keep subdividing yellow boxes
into $2^d$ boxes of equal size until one of the following events occur:

\begin{itemize}
\item[(1)] Points $s$ and $t$ lie in green boxes and are connected by a path that lies entirely in green boxes. Such a path is a solution to the path planning problem.
See Figure~\ref{fig:problem_illustration}, left, for an illustration.
\item[(2)] Each path from $s$ to $t$ intersects some red square.
In this case, no collision-free path from $s$ to $t$ can exist, and
we say that the red boxes separate $s$ and $t$.
See Figure~\ref{fig:problem_illustration}, right, for an illustration.
\end{itemize}

The described subdivision strategy is also used for the task of segmentation of digital images; see~\cite{aizawa} and references therein.
In that situation, the approach would decide whether the pixels $s$ and $t$ belong to the same connected component of the image.

How quickly can we decide whether one of the two conditions is satisfied?
Condition (1) can be easily checked by union-find~\cite{tarjan}: just create a new element
for each new green box and make unions to keep together adjacent green boxes, 
always checking whether the boxes containing $s$ and $t$ fall into the same set. 
That means that the amortized complexity
of checking condition (1) is almost linear in the number of green boxes produced.
For condition (2), the case seems less clear~-- an alternative way of phrasing
the condition is to check whether the union of green and yellow boxes contains
$s$ and $t$ in the same connected component. The union-find approach cannot directly be applied
because yellow regions might turn into red and, therefore, the area covered by the boxes
may shrink.
In this paper, we discuss how to test the second condition in the planar case ($d=2$).

\begin{figure}[thb]
\centering
	\includegraphics[width=6cm,page=1]{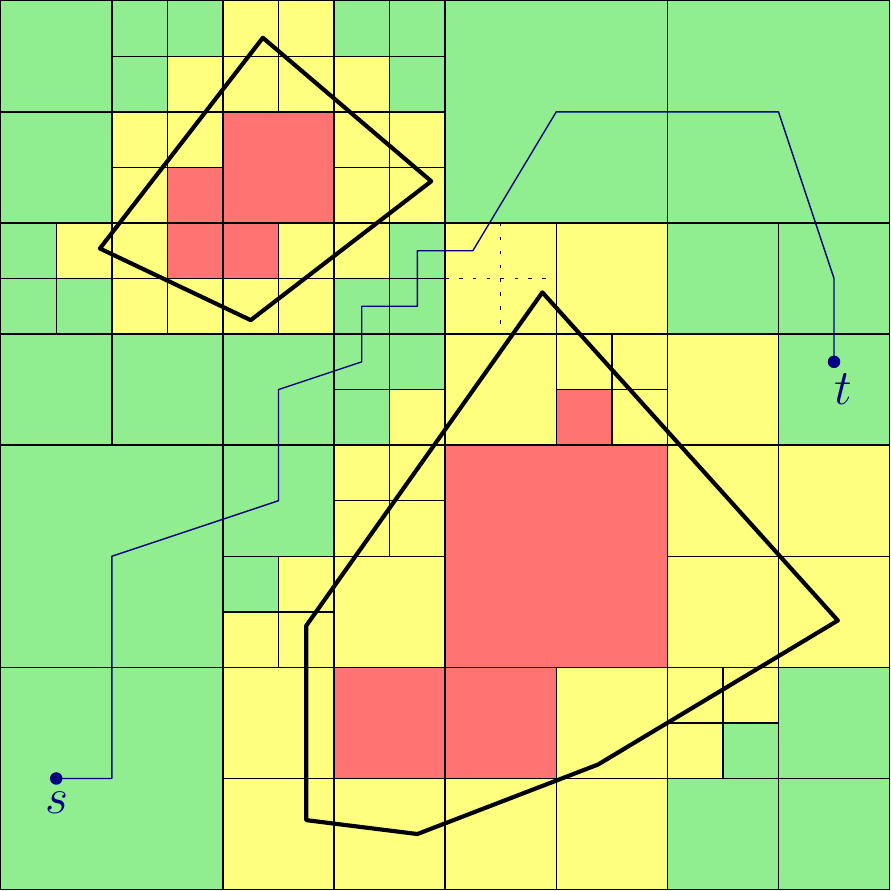}
	\hspace{0.5cm}
	\includegraphics[width=6cm,page=2]{subdivision}
	\caption{Left: Configuration space with two (convex) holes. 
	When subdividing the marked yellow box according to the dashed lines, $s$ and $t$ become connected.
	Right: Configuration space with an annulus-shaped obstacle. When subdividing the marked yellow box, the union
	of red boxes separates $s$ and $t$, so no path can exist.}
	\label{fig:problem_illustration}
\end{figure}

We consider the following generalization of the problem.
We have two points $s$ and $t$ in the plane.
We get a set $\calK$ of compact, convex sets in the plane
iteratively, adding the sets one by one. 
Each of the sets added to $\calK$ is disjoint from $s$ and $t$.
In the motivating problem,
the red boxes would be the elements of $\calK$.
At the end of the insertion of a new compact convex set into $\calK$, 
we want to know whether $\calK$ separates $s$ and $t$. That is,
we want to know whether each path from $s$ to $t$ has to intersect
some element of $\calK$.
Thus, we want a semi-dynamic data structure to store $\calK$ that
allows the insertion of new elements to $\calK$ and decides whether
$\calK$ separates $s$ and $t$.

We show that we can maintain $\calK$ under insertions 
using a slightly more sophisticated union-find approach. 
The time to insert a new set $K_u$ into $\calK$ is the time
we need to find all the $k$ elements of $\calK$ intersecting $K_u$,
plus $O(k)$ union-find operations.
The idea is based on a classical parity argument saying that 
$s$ and $t$ are separated if and only if we
can find a closed curve contained in the union of the elements of $\calK$ 
that is crossed an odd number of times by the line segment $\ell$ from $s$ to $t$. 
We maintain a union-find data structure for the sets of $\calK$ and augment
it by storing additional information about the parity of crossings with the line segment $\ell$.
Using this additional knowledge, we can quickly decide whether adding a new set
to $\calK$ forms a cycle that separates $s$ and $t$, 
and the information can be maintained under unions
and path compressions without asymptotic overhead.

If in the motivating subdivision procedure we always subdivide a largest
yellow box, we obtain $O(1)$ time per yellow box and 
$O(\alpha(n))$ amortized time per red box, where $n$ is the number of red boxes and
$\alpha(\cdot)$ is the inverse of the Ackermann function.
Thus, we obtain the same asymptotic behavior for testing conditions (1) and (2).

\paragraph{Roadmap.}
In Section~\ref{sec:static} we discuss a criterion to decide when $\calK$
separates $s$ and $t$ in the static case.
In Section~\ref{sec:dynamic} we extend this to the semi-dynamic case,
where sets get added to $\calK$.
In Section~\ref{sec:subdivision} we discuss the application to the motivating
subdivision procedure.

Our aim is to provide a self-contained exposition.
Some of the arguments are an adaptation of Cabello and Giannopoulos~\cite{cg-15} 
to this simpler setting,
others can be shorten substantially using machinery from Algebraic Topology.

\section{Static connectivity}
\label{sec:static}

Let $\calK$ denote a finite family of compact convex sets in the plane,
and let $\KK$ denote their union. 
We use the notation $\bar \KK =\RR^2\setminus \KK$.
Let $s$ and $t$ be points in $\bar\KK$.

The set $\KK$ \DEF{separates}
$s$ and $t$ if they are in different path-connected components of $\bar \KK$.
Equivalently, $\KK$ separates $s$ and $t$ if 
each path in the plane from $s$ to $t$ intersects $\KK$.
We also say that $\calK$ separates $s$ and $t$.

In the next subsection we discuss a criterion to decide when $\KK$ separates $s$ and $t$.
The criterion is based on considering all polygonal paths contained in $\KK$, and thus is
computationally unfeasible.
In Subsection~\ref{sec:intersection} we discuss how this criterion can be checked 
in the intersection graph of $\calK$, and thus obtain a discrete version suitable
for computations.

We will consistently use Greek letters $\pi,\gamma, \tau, \dots$ only for (polygonal) curves.

\subsection{Topological criterion for separation} 

A polygonal curve $\pi$ is \DEF{generic} (with respect to $s$ and $t$) if $\pi$ does not contain
$s$ nor $t$ and the line segment from $s$ to $t$ does neither contain an endpoint
of $\pi$ nor a self-intersection of $\pi$. 
We will assume in our discussion that all the polygonal curves are generic.
We can enforce this assumption making a rotation, so that $\ell$ is horizontal,
and replacing the point $s$ by $s'=s+(0,\varepsilon)$, for an infinitesimal $\varepsilon>0$.
We always use the same perturbed point $s'$.
Since $\calK$ is finite,
separation of $s$ and $t$ with $\KK$ is equivalent to separation of $s'$ and $t$ with $\KK$.
The computations can then be made using simulation of simplicity~\cite{sos}.

We fix $\ell$ as the line segment joining $s'$ and $t$.
The \DEF{crossing number} of $\ell$ with a polygonal curve $\pi$
is the number of intersections of $\ell$ and $\pi$.
We denote by $\CR(\ell,\pi)$ the modulo $2$ value of the crossing number of $\ell$ and $\pi$.
Thus, $\CR(\ell,\pi)=1$ if and only if the crossing number is odd.
For the whole paper, \emph{any arithmetic involving $\CR(\cdot,\cdot)$ is done modulo 2.}

It is important to use always the same perturbed point $s'$. Then, if a polygonal curve $\pi$ is 
the concatenation of $\pi'$ and $\pi''$, we have
$\CR(\ell,\pi) = \CR(\ell,\pi') + \CR(\ell,\pi'')$. If we would use different perturbed points
and the common endpoint of $\pi'$ and $\pi''$ lies in the line segment $st$,
then the inequality does not necessarily hold.

A polygonal curve $\pi$ is \DEF{closed} if its endpoints coincide.
It is \DEF{simple} if it does not have any self-intersections, except for the common
endpoint in the case of closed polygonal paths. 

Note that in the following lemma we do not require simple curves.

\begin{lemma}
\label{lem:parity_folklore}
	The set $\KK$ separates $s$ and $t$ if and only if
	there exists a closed polygonal curve $\pi$ contained in $\KK$ 
	such that $\CR(\ell,\pi)=1$.
\end{lemma}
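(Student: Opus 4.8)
The plan is to prove the two implications separately. The implication ``there exists such a $\pi$ $\Rightarrow$ $\KK$ separates $s$ and $t$'' is the easy, parity-based direction, while ``$\KK$ separates $\Rightarrow$ such a $\pi$ exists'' requires actually constructing the curve and is where the real work lies. Throughout I work with the perturbed point $s'$ and the fixed segment $\ell$, and I freely use genericity (simulation of simplicity) to avoid tangencies and degenerate incidences.

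For the easy direction I would argue by contradiction using the mod-$2$ winding number. Fix a generic closed polygonal curve $\pi\subseteq\KK$ with $\CR(\ell,\pi)=1$, and for a point $p\notin\pi$ let $w_\pi(p)\in\{0,1\}$ be its winding number around $\pi$ taken modulo $2$. The basic fact is that $w_\pi$ is locally constant on $\RR^2\setminus\pi$ and flips whenever a point crosses an edge of $\pi$ transversally; consequently, for any generic polygonal path $\gamma$ from $p$ to $q$ (both off $\pi$) one has $\CR(\gamma,\pi)=w_\pi(p)+w_\pi(q)$, so this parity depends only on the endpoints of $\gamma$. Now suppose $\KK$ did \emph{not} separate $s'$ and $t$. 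Then there is a path from $s'$ to $t$ inside the open set $\bar\KK$, and since open connected subsets of the plane are polygonally connected we may take it to be a generic polygonal path $\gamma\subseteq\bar\KK$. As $\pi\subseteq\KK$, the path $\gamma$ is disjoint from $\pi$, so $\CR(\gamma,\pi)=0$; but $\CR(\gamma,\pi)$ must equal $\CR(\ell,\pi)=1$ since $\gamma$ and $\ell$ share the endpoints $s'$ and $t$. This contradiction proves separation.

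For the hard direction, assume $\KK$ separates $s'$ and $t$ and let $U$ be the path-connected component of $\bar\KK$ that contains $s'$; by assumption $t\notin U$. Since $U$ is a component of the open set $\bar\KK$, a short point-set argument shows $\partial U\subseteq\KK$ and even $t\notin\overline U$. I would first observe that $\ell$ meets $\partial U$ an \emph{odd} number of times: after ensuring $\ell$ is transversal to the finitely many boundary arcs, the indicator function $\mathbf 1_U$ restricted to $\ell$ equals $1$ at $s'$ and $0$ at $t$ and flips precisely at the transversal crossings of $\ell$ with $\partial U$, so their number is odd. Thus $\CR(\ell,\partial U)=1$, where $\partial U$ is regarded as a mod-$2$ $1$-cycle lying inside $\KK$. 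It remains to convert $\partial U$ into a genuine closed polygonal curve of the same crossing parity, and this conversion is the main obstacle.

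The obstacle has three aspects. First, $\partial U$ need not be a single simple closed curve: since $\calK$ is finite, $\partial U$ is a union of finitely many arcs lying on the boundaries $\partial K_i$ and meeting at junction vertices, and a local count on a small circle around each vertex (the label ``in $U$ / not in $U$'' changes an even number of times) shows that an even number of arcs emanate from it; hence $\partial U$ is an even subgraph and decomposes into edge-disjoint closed walks $\pi_1,\dots,\pi_m$. Because the crossings of $\ell$ are distributed among the arcs, $\sum_i\CR(\ell,\pi_i)=\CR(\ell,\partial U)=1$, so at least one $\pi_i$ crosses $\ell$ oddly. Second, the arcs are curved rather than polygonal; here I would use convexity, replacing each arc, which lies on some $\partial K_i\subseteq K_i$, by the polygonal path through sufficiently many sample points: chords of a convex set stay inside it, so the new curve remains inside $K_i\subseteq\KK$, and by transversality (a line meets a convex boundary at most twice) a fine enough sampling preserves the parity of crossings with $\ell$. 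Third, degeneracies (lower-dimensional sets, tangencies of $\ell$ with $\partial\KK$, boundary arcs shared by two sets) are absorbed into the perturbation of $s$ and the simulation-of-simplicity convention already fixed in the text. Carrying this out yields a closed polygonal curve $\pi=\pi_i\subseteq\KK$ with $\CR(\ell,\pi)=1$, as required. I note that this direction is an instance of Alexander duality in the plane, so the bookkeeping around $\partial U$ could be replaced by a one-line homological argument, at the cost of leaving the self-contained setting.
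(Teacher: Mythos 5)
Your proposal is correct, but it takes a genuinely different route from the paper's in both directions. For the implication ``odd curve $\Rightarrow$ separation,'' the paper splits a possibly self-intersecting $\pi$ at its self-intersections into simple closed polygonal curves, uses additivity of $\CR$ to find a simple piece of odd parity, and then invokes the classical fact (cited from Mohar and Thomassen) that a simple closed polygonal curve crossing $\ell$ an odd number of times separates its endpoints; you instead prove the needed fact from scratch via the mod-$2$ winding number $w_\pi$, which never requires simplicity and so avoids the decomposition entirely. For the implication ``separation $\Rightarrow$ odd curve,'' the paper extracts a single \emph{simple} closed curve inside the boundary of the component $A$ of $\bar\KK$ containing $s$ that separates $s$ and $t$ --- a nontrivial topological assertion it justifies only with a parenthetical remark --- and then shortcuts it into a polygon; you instead treat all of $\partial U$ as a mod-$2$ $1$-cycle, obtain $\CR(\ell,\partial U)=1$ from the indicator-function argument along $\ell$, decompose $\partial U$ into closed walks using the even-degree property at junctions, select one walk of odd parity, and polygonalize it by chords using convexity. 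Your route buys something real: it never needs a simple separating curve, which is exactly the hardest unproved claim in the paper's proof, and this matches the fact that the lemma itself does not require simplicity. The cost is heavier bookkeeping about the arc structure of $\partial U$, which --- like the paper --- you ultimately settle at the level of a sketch: the finiteness of the arc decomposition (convex boundaries can share whole arcs, and perturbing $s$ does not remove such coincidences, so ``absorbed into simulation of simplicity'' is doing some silent work there), and the claim that fine sampling preserves crossing parity, are asserted rather than fully carried out. Overall the two proofs sit at a comparable level of rigor; yours is more self-contained (no external citation for the Jordan-type fact), while the paper's is shorter because it outsources that step.
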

\begin{proof}
	We use the following classical argument, 
	which sometimes is an intermediary step towards a proof of the Jordan's curve theorem:
	A simple closed polygonal curve $\pi$ separates $s'$ and $t$ if and only if 
	$\ell$ and $\pi$ have an odd crossing number. 
	See, for example, Mohar and Thomassen~\cite[Section 2.1]{mt-01} for a formal proof. 

	Assume that $\KK$ contains a closed polygonal curve $\pi$ 
	such that $\ell$ and $\pi$ have an odd crossing number. 
	The curve $\pi$ may have self-intersections.
	If $\pi$ is not simple, we can split it at self-intersections arbitrarily to obtain
	simple, closed polygonal paths $\pi_1,\dots, \pi_k$ that have, all together,
	the same image as $\pi$. 
	Since we have $1=\CR(\ell,\pi)=\sum_i \CR(\ell,\pi_i)$,
	at least one of the curves $\pi_i$ has $1=\CR(\ell,\pi_i)$.
	Such a curve $\pi_i$ separates the endpoints of $\ell$, and thus separates $s$ and $t$.
	It follows that there is no path in $\RR^2\setminus \pi_i$ from $s$ to $t$.
	Since $\bar\KK\subset \RR^2\setminus \pi_i$, there is no path in $\bar\KK$
	from $s$ to $t$.

	Assume that there is no path in $\bar\KK$ from $s$ to $t$. 
	Consider the path-connected component $A$ of $\bar\KK$ that contains $s$. 
	Since $t$ is in a different cell of $\bar\KK$
	and $\calK$ is a finite collection of compact, convex bodies, 
	there exists a simple closed curve $\pi$ contained in the boundary of $A$ that separates $s$ and $t$.
	We can make shortcuts in $\pi$ to obtain a simple closed polygonal curve $\pi'$ contained in $\KK$
	that separates $s$ and $t$. (This can be shown formally using the convexity of
	the elements of $\calK$ and the compactness of $\KK$.)
	The resulting simple polygonal path $\pi'$ separates $s$ and $t$, and thus
	the crossing number of $\ell$ and $\pi'$ is odd.
\end{proof}

\begin{lemma}
\label{lem:same}
	Let $K_u$ and $K_v$ be two compact convex sets of $\calK$.
	For any two generic polygonal curves $\pi$ and $\pi'$ contained in $K_u\cup K_v$ 
	with the same endpoints, we have $\CR(\ell,\pi)=\CR(\ell,\pi')$.
\end{lemma}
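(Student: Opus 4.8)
The plan is to reduce the statement to the previous lemma by gluing $\pi$ and $\pi'$ into a single closed curve, and then to show that the two-element family $\{K_u,K_v\}$ cannot separate $s$ and $t$. First I would let $\overline{\pi'}$ denote $\pi'$ traversed in reverse and set $\gamma$ to be the concatenation of $\pi$ with $\overline{\pi'}$; since $\pi$ and $\pi'$ share both endpoints, $\gamma$ is a closed polygonal curve contained in $K_u\cup K_v$. Reversing a curve does not change how often it meets $\ell$, so $\CR(\ell,\overline{\pi'})=\CR(\ell,\pi')$, and the additivity of $\CR(\ell,\cdot)$ under concatenation (valid because we keep the single perturbed point $s'$) gives
\[
 \CR(\ell,\gamma)=\CR(\ell,\pi)+\CR(\ell,\pi').
\]
Working modulo $2$, the claimed equality $\CR(\ell,\pi)=\CR(\ell,\pi')$ is therefore equivalent to $\CR(\ell,\gamma)=0$, so it suffices to prove that every closed polygonal curve contained in $K_u\cup K_v$ has even crossing number with $\ell$.

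For this I would apply Lemma~\ref{lem:parity_folklore} to the family $\{K_u,K_v\}$, whose union avoids $s$ and $t$. By the contrapositive of that lemma, if $K_u\cup K_v$ does not separate $s$ and $t$, then no closed polygonal curve contained in $K_u\cup K_v$ can have $\CR(\ell,\cdot)=1$; in particular $\CR(\ell,\gamma)=0$, which finishes the argument.

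It then remains to establish the geometric heart of the proof: the union of two compact convex sets does not separate $s$ and $t$, i.e.\ $\RR^2\setminus(K_u\cup K_v)$ is connected. Here convexity is essential. Given any point $p\notin K_u\cup K_v$, the separating hyperplane theorem yields a line strictly separating $p$ from $K_u$ and another strictly separating $p$ from $K_v$; writing $A_u$ and $A_v$ for the open half-planes containing $p$, the set $W=A_u\cap A_v$ is an open convex neighborhood of $p$ that is disjoint from $K_u\cup K_v$. Being a nonempty intersection of two open half-planes, $W$ is unbounded, hence it contains a ray from $p$ going to infinity; therefore $p$ lies in the unbounded component of $\RR^2\setminus(K_u\cup K_v)$. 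As $p$ was arbitrary, the complement has a single component, so $s'$ and $t$ lie in the same component and are not separated.

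The main obstacle is precisely this last step, whose content is geometric rather than combinatorial: a priori two regions could conspire to enclose a bounded hole, and what rules this out is convexity, captured through the separating-line construction together with the elementary fact that a nonempty intersection of two open half-planes is unbounded. Everything else — the reduction to a closed curve and the appeal to Lemma~\ref{lem:parity_folklore} — is routine bookkeeping with the parity $\CR(\ell,\cdot)$, provided we consistently use the same perturbed point $s'$ so that additivity under concatenation is available.
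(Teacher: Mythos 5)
Your proof is correct, and its skeleton is the same as the paper's: concatenate $\pi$ with the reverse of $\pi'$ to obtain a closed polygonal curve $\gamma$ in $K_u\cup K_v$ with $\CR(\ell,\gamma)=\CR(\ell,\pi)+\CR(\ell,\pi')$, then invoke Lemma~\ref{lem:parity_folklore} for the two-element family after showing that $K_u\cup K_v$ does not separate $s$ and $t$. The only divergence is how that last geometric claim is established. The paper argues from $s$ and $t$ directly: since $K_u$ is convex and avoids $s$, the directions of the vectors $\overrightarrow{sx}$, $x\in K_u$, cover less than half of $\SS^1$, and likewise for $K_v$, so some ray escapes from $s$ to infinity inside the complement; the same holds for $t$, and the two rays are joined by a path far away. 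You instead take an \emph{arbitrary} point $p$ of the complement, apply the separating hyperplane theorem twice, and note that the wedge $A_u\cap A_v$, being a nonempty intersection of two open half-planes in $\RR^2$, is unbounded and hence contains a ray from $p$ to infinity; this places every point of $\RR^2\setminus(K_u\cup K_v)$ in the unbounded component, i.e., the complement is connected. Your route proves a slightly stronger fact (connectedness of the entire complement rather than just a path between the two relevant points) and delegates the geometry to a standard theorem, at the cost of the small auxiliary fact about intersections of half-planes; the paper's direction-counting argument is self-contained and tailored to $s$ and $t$. Both versions rest on the same two pillars you identified: convexity plus compactness to escape to infinity, and a single fixed perturbed point $s'$ so that $\CR(\ell,\cdot)$ is additive under concatenation.
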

\begin{proof}
	First note that $K_u\cup K_v$ does not separate $s$ and $t$. This can be 
	seen as follows. Let $\SS^1$ be the set of directions.
	Consider the set of directions of the vectors $\overrightarrow{sx}$,
	for all $x\in K_u$. Since $K_u$ is convex and $s\notin K_u$, this directions
	cover less than half of $\SS^1$. A similar statement holds for $K_v$.
	It follows that there exists some ray from $s$ to infinity
	in $\RR^2\setminus (K_u\cup K_v)$. Similarly, there exists a ray from $t$ to infinity
	in $\RR^2\setminus (K_u\cup K_v)$. Those two rays and an extra path far enough
	can be combined to obtain a path from $s$ to $t$ in $\RR^2\setminus (K_u\cup K_v)$. Thus,
	$K_u\cup K_v$ does not separate $s$ and $t$. 

	Since $K_u\cup K_v$ does not separate $s$ and $t$, 
	Lemma~\ref{lem:parity_folklore} implies that any closed path $\gamma$ contained
	in $K_u\cup K_v$ has $\CR(\ell,\gamma)=0$.
	The concatenation of $\pi$ and the reverse of $\pi'$ is a closed path contained
	in $K_u\cup K_v$ and therefore $\CR(\ell,\pi)+\CR(\ell,\pi')=0$.
\end{proof}

\subsection{Criterion on the Intersection Graph} 
\label{sec:intersection}

Consider the \DEF{intersection graph} of $\cal K$ and denote it by $G$.
Each element $K_v\in\calK$ is a node of $G$; we will denote the node by $v$ to 
match standard graph theory notation.
There is an edge $uv$ in $G$ if and only if $K_u$ and $K_v$ intersect. 
The graph $G$ is an abstract graph. 
Next we provide a geometric representation.

For each node $v$ of $G$ choose a point $p_v$ in $K_v$. 
For each edge $uv$ of $G$, let $\gamma(uv)$ be a polygonal path from $p_u$ to $p_v$
contained in the union $K_u\cup K_v$. Since $K_u$ and $K_v$ are convex and intersect, 
we can always choose $\gamma(uv)$ with at most $2$ edges.
The pair 
\[
	(\{ p_v\mid v\in V(G)\},\{ \gamma(uv)\mid uv\in E(G)\})
\] 
is a drawing of $G$.
(It is not necessarily an embedding because drawings of edges may cross, for example when four 
axis-parallel squares have disjoint interiors but share a vertex.)
For each walk $W=e_1\dots e_k$ in $G$, let $\gamma(W)$ be the polygonal path obtained
by concatenating $\gamma(e_1),\dots, \gamma(e_k)$. 
If $W$ is a closed walk, then $\gamma(W)$ is a closed polygonal curve.

\begin{lemma}
\label{lem:nerve}
	The set $\KK$ separates $s$ and $t$ if and only if
	there exists a closed walk $W$ in $G$ such that $\CR(\ell,\gamma(W))=1$.
\end{lemma}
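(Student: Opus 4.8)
The plan is to prove the biconditional in Lemma~\ref{lem:nerve} by reducing it to the topological criterion already established in Lemma~\ref{lem:parity_folklore}. That lemma characterizes separation in terms of the existence of \emph{any} closed polygonal curve $\pi\subseteq\KK$ with $\CR(\ell,\pi)=1$, whereas here the curve must arise as $\gamma(W)$ for some closed walk $W$ in the intersection graph $G$. So the whole task is to translate freely between arbitrary polygonal curves in $\KK$ and the structured curves $\gamma(W)$ coming from walks, while preserving the parity of crossings with $\ell$.

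For the easy direction, suppose such a closed walk $W$ exists. Then $\gamma(W)$ is by construction a closed polygonal curve contained in $\KK$ (each $\gamma(e_i)$ lies in the union of two elements of $\calK$, hence in $\KK$), and it has $\CR(\ell,\gamma(W))=1$. Lemma~\ref{lem:parity_folklore} immediately gives that $\KK$ separates $s$ and $t$. For the converse, assume $\KK$ separates $s$ and $t$; Lemma~\ref{lem:parity_folklore} yields a closed polygonal curve $\pi\subseteq\KK$ with $\CR(\ell,\pi)=1$. The goal is to replace $\pi$ by some $\gamma(W)$ of the same crossing parity. First I would cover $\pi$ by the elements of $\calK$: since $\pi\subseteq\KK=\bigcup_v K_v$ and $\pi$ is compact, I would subdivide $\pi$ into small consecutive arcs $\pi=\sigma_1\sigma_2\cdots\sigma_m$ (returning to the start) so that each arc $\sigma_i$ lies inside a single set $K_{v_i}$. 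Consecutive arcs $\sigma_i,\sigma_{i+1}$ share an endpoint lying in $K_{v_i}\cap K_{v_{i+1}}$, so $v_iv_{i+1}$ is an edge of $G$ (or $v_i=v_{i+1}$, which I can simply drop). This gives a closed walk $W=v_1v_2\cdots v_m v_1$ in $G$.

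The key step is to argue that $\CR(\ell,\pi)=\CR(\ell,\gamma(W))$. Here is where Lemma~\ref{lem:same} does the real work. For each edge $v_iv_{i+1}$, both the arc $\sigma_i$ (together with a short connector to the chosen points $p_{v_i},p_{v_{i+1}}$) and the canonical edge-path $\gamma(v_iv_{i+1})$ are generic polygonal curves with the same endpoints contained in $K_{v_i}\cup K_{v_{i+1}}$; Lemma~\ref{lem:same} forces them to have equal crossing parity with $\ell$. More precisely, I would insert detours from each shared endpoint of $\sigma_i,\sigma_{i+1}$ to the representative point $p_{v_{i+1}}$ and back; since these detours are traversed once in each direction their net contribution to $\CR(\ell,\cdot)$ cancels modulo $2$, using the additivity $\CR(\ell,\pi'\pi'')=\CR(\ell,\pi')+\CR(\ell,\pi'')$ guaranteed by the use of a single fixed perturbed point $s'$. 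After inserting the detours, $\pi$ is rewritten (up to crossing-parity) as a concatenation of pieces, each lying in a single $K_{v_i}\cup K_{v_{i+1}}$ and running between $p_{v_i}$ and $p_{v_{i+1}}$; applying Lemma~\ref{lem:same} piecewise replaces each such piece by $\gamma(v_iv_{i+1})$ without changing parity, and summing yields $\CR(\ell,\gamma(W))=\CR(\ell,\pi)=1$.

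I expect the main obstacle to be the bookkeeping in this parity-transfer argument rather than any deep idea: one has to choose the subdivision of $\pi$ fine enough that each arc sits in a single convex set, handle the genericity of all the auxiliary connector segments (which can be assured by the same simulation-of-simplicity perturbation used throughout), and make sure the inserted back-and-forth detours are set up so their crossings cancel in pairs modulo $2$. The convexity of the $K_v$ and the at-most-two-edge construction of $\gamma(uv)$ keep every auxiliary piece inside the relevant union $K_{v_i}\cup K_{v_{i+1}}$, which is exactly the hypothesis Lemma~\ref{lem:same} needs, so once the decomposition is set up correctly the equality of parities follows by a telescoping sum.
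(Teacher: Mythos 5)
Your proposal is correct and follows essentially the same route as the paper's proof: the easy direction is a direct application of Lemma~\ref{lem:parity_folklore}, and the converse transfers parity from $\pi$ to $\gamma(W)$ by inserting doubly-traversed spokes to the representative points $p_v$ (whose crossings cancel modulo~$2$) and invoking Lemma~\ref{lem:same} piecewise, exactly as the paper does with its curves $\hat\gamma_i$. The only cosmetic difference is that you subdivide $\pi$ into arcs each contained in a \emph{single} set $K_{v_i}$ and then regroup, whereas the paper cuts $\pi$ directly into pieces each contained in a union of two sets; the telescoping parity argument is the same.
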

\begin{proof}
	Assume that $\KK$ separates $s$ and $t$.
	Because of Lemma~\ref{lem:parity_folklore}, 
	there is some polygonal curve $\pi$ contained in $\KK$ such that $\CR(\ell,\pi)=1$. 
	We break the path $\pi$ into pieces such that each piece is contained
	in the union of $2$ sets from $\cal K$.
	Let $\pi_1,\dots ,\pi_k$ be the resulting pieces, each of them a polygonal curve. 
	For each piece $\pi_i$, let $x_i$ and $y_i$ be the endpoints
	of $\pi_i$, and let $K_{u_i}$ and $K_{v_i}$ be the elements of $\cal K$ that contain $x_i$ and $y_i$,
	respectively, so that $\pi_i$ is contained in $K_{u_i}\cup K_{v_i}$.
	Note that $u_i v_i$ is an edge of $G$. 
	Let $W$ be the closed walk with edges $u_1v_1,\dots, u_kv_k$.
	
	We claim that $\CR(\ell,\gamma(W))=\CR(\ell,\pi)=1$. 
	To see this, consider for each piece $\pi_i$ the polygonal curve 
	$\hat\gamma_i$ from $p_{u_i}$ to $p_{v_i}$ obtained by concatenating 
	the line segment from $p_{u_i}$ to $x_i$, 
	followed by $\pi_i$, 
	and followed by the line segment from $y_i$ to $p_{v_i}$. 
	See Figure~\ref{fig:notation} for an example.
	For each piece $\pi_i$, the polygonal curves $\hat\gamma_i$ and $\gamma(u_iv_i)$ have the same
	endpoints and are contained in the union $K_{u_i}\cup K_{v_i}$.
	Because of Lemma~\ref{lem:same}, we have
	$\CR(\ell,\hat\gamma_i)=\CR(\ell,\gamma(u_iv_i))$.
	It follows that, if we define $\hat\gamma$ as the concatenation of $\hat\gamma_1,\dots ,\hat\gamma_k$,
	we have $\CR(\ell,\gamma(W))=\CR(\ell, \hat\gamma)$. 
	Moreover, $\CR(\ell,\hat\gamma)=\CR(\ell, \pi)$
	because $\hat\gamma$ is essentially $\pi$ with some spokes connecting $x_i$ to $p_{u_i}$,
	where the number of crossings evens out.
	We conclude that $\CR(\ell,\gamma(W))= \CR(\ell,\hat\gamma)=\CR(\ell, \pi)=1$.
	This finishes one direction of the proof.
	
	\begin{figure}[thb]
	\centering
		\includegraphics[page=1,scale=.9]{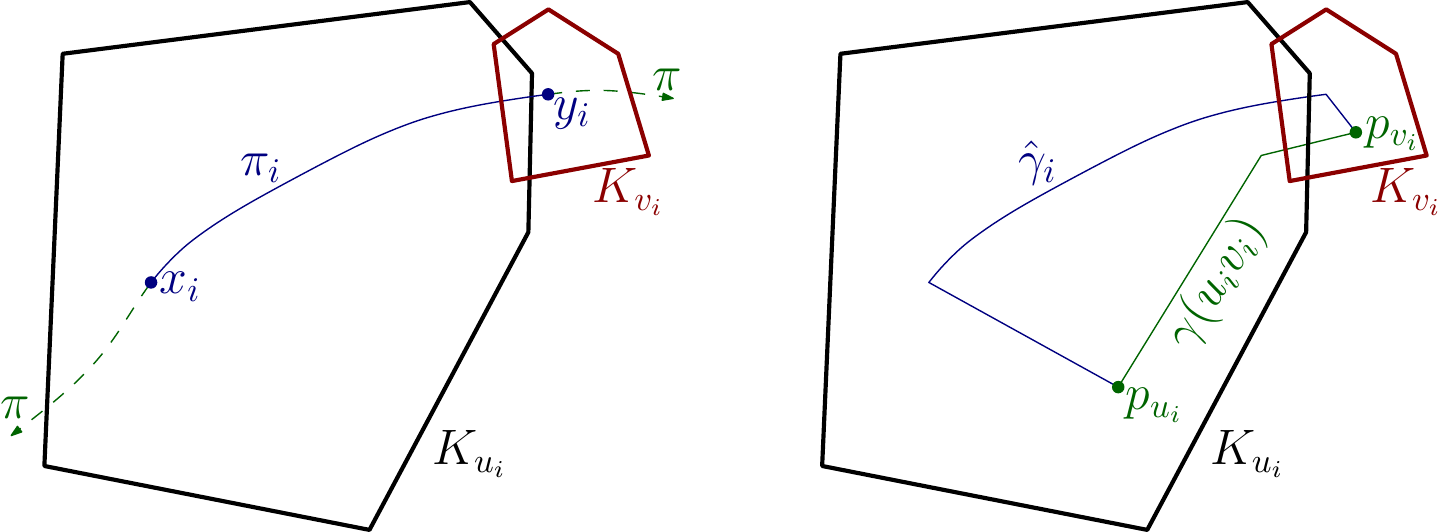}
		\caption{Notation in the proof of Lemma~\ref{lem:nerve}.}
		\label{fig:notation}
	\end{figure}

	For the other direction, assume that $G$ has a closed walk $W$ such that the crossing
	number of $\ell$ and $\gamma(W)$ is odd. Since the closed polygonal path
	$\gamma(W)$ is contained in $\KK$ by
	construction, Lemma~\ref{lem:parity_folklore} implies that $\KK$ separates $s$ and $t$.
	This proves the other direction.
\end{proof}

We extend Lemma~\ref{lem:nerve} 
to a necessary and sufficient condition for $s$ and $t$ being disconnected
that involves only a few cycles of $G$. 
Let $T$ be any maximal spanning forest of $G$, that is, $T$ contains a spanning
tree of each connected component of $G$. 
For each edge $e$ of $G-E(T)$, let $\cycle(T,e)$ be the unique cycle
in $T+e$, and let $\tau(T,e)$ be the curve $\gamma(\cycle(T,e))$.
That is, $\tau(T,e)$ is the polygonal curve describing $\cycle(T,e)$ in the drawing.

\begin{lemma}
\label{lem:homology_condition}
	Let $T$ be a maximal spanning forest of $G$.
	The set $\KK$ separates $s$ and $t$ if and only if
	there exists some edge $e\in E(G)\setminus E(T)$ such that $\CR(\ell,\tau(T,e))=1$.
\end{lemma}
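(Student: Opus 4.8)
The plan is to deduce this from Lemma~\ref{lem:nerve} together with a standard $\Z/2$ cycle-space argument: the parity crossing number is a linear functional on the cycle space of $G$, and the fundamental cycles $\cycle(T,e)$ form a basis of that space. One direction is immediate. If some $e\in E(G)\setminus E(T)$ has $\CR(\ell,\tau(T,e))=1$, then $\cycle(T,e)$ is a closed walk $W$ with $\CR(\ell,\gamma(W))=1$, so Lemma~\ref{lem:nerve} gives that $\KK$ separates $s$ and $t$.

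For the converse I would first set up the algebraic framework over $\Z/2$. Define $\phi$ on subsets $D\subseteq E(G)$ by $\phi(D)=\sum_{e\in D}\CR(\ell,\gamma(e))$, which is $\Z/2$-linear by construction and satisfies $\phi(\cycle(T,e))=\CR(\ell,\gamma(\cycle(T,e)))=\CR(\ell,\tau(T,e))$. The crucial point, which comes from the additivity of $\CR(\ell,\cdot)$ under concatenation (valid precisely because we always use the same perturbed point $s'$), is that for any closed walk $W=e_1\cdots e_k$ one has $\CR(\ell,\gamma(W))=\sum_i\CR(\ell,\gamma(e_i))$; since edges traversed an even number of times cancel modulo $2$, this equals $\phi([W])$, where $[W]$ denotes the set of edges traversed an odd number of times by $W$.

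Now I would invoke Lemma~\ref{lem:nerve}: if $\KK$ separates $s$ and $t$, there is a closed walk $W$ with $\CR(\ell,\gamma(W))=1$, hence $\phi([W])=1$. Because $W$ is closed, every vertex has even degree in the edge multiset of $W$, so $[W]$ is an even subgraph and thus lies in the cycle space of $G$. I would then use the classical decomposition $[W]=\sum_{e\in S}\cycle(T,e)$, where $S$ is the set of non-tree edges contained in $[W]$; this holds because $\cycle(T,e)$ is the unique basis element containing the non-tree edge $e$. Applying the linear functional $\phi$ yields
\[
	1 \;=\; \phi([W]) \;=\; \sum_{e\in S}\CR(\ell,\tau(T,e)),
\]
so at least one edge $e\in S\subseteq E(G)\setminus E(T)$ satisfies $\CR(\ell,\tau(T,e))=1$, as required.

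The main obstacle is essentially bookkeeping: one must verify carefully that $\CR(\ell,\gamma(W))$ depends only on the modulo $2$ edge multiplicities of $W$, so that it factors through the cycle space and agrees with the linear functional $\phi$. This rests entirely on the concatenation additivity of $\CR$ under the fixed perturbation $s'$; the remainder is the standard spanning-forest decomposition of the $\Z/2$ cycle space and poses no real difficulty.
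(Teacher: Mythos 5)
Your proof is correct, and it is a genuine variant of the paper's argument rather than a reproduction of it. The paper states up front that the essential idea is the cycle space and the basis of fundamental cycles, but then deliberately re-derives everything from scratch: it reduces to connected $G$, roots $T$ at a vertex $r$, and associates to each edge $uv$ the closed walk $\lambda(uv)$ running from $r$ to $u$ in $T$, across $uv$, and back from $v$ to $r$ in $T$. The forward direction then follows from the telescoping identity $\CR(\ell,\gamma(W))=\sum_i \CR(\ell,\lambda(u_iv_i))$ (the root-path contributions cancel in pairs at the vertices of the closed walk), combined with the observations that $\CR(\ell,\lambda(uv))=0$ for tree edges and $\CR(\ell,\lambda(uv))=\CR(\ell,\tau(T,uv))$ for non-tree edges; edge sets and the basis theorem never appear. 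Your argument instead makes the linear algebra explicit: you factor $\CR(\ell,\gamma(\cdot))$ through the odd-multiplicity edge set $[W]$, note that $[W]$ is an even subgraph, decompose $[W]=\sum_{e\in S}\cycle(T,e)$ over the non-tree edges $S$ of $[W]$, and apply the linear functional $\phi$. Both routes formalize the same underlying fact; yours is shorter and needs no choice of root or reduction to connected components, but it leans on the fundamental-cycle decomposition as a known theorem, whereas the paper's telescoping computation is fully self-contained, which matches its stated expository goal. If you want your version to be equally self-contained, the only missing piece is the short proof of that decomposition: the symmetric difference of $[W]$ with $\sum_{e\in S}\cycle(T,e)$ is an even subgraph containing only tree edges, and a nonempty even subgraph must contain a cycle, which is impossible inside the forest $T$.
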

\begin{proof}
	The essential idea is to use the so-called cycle space of a graph and the fact that
	$\{ \cycle(T,e) \mid e\in E(G)\setminus E(T)\}$ is a basis. We next provide the details
	using no background.
	
	Since we can treat each connected component of $G$ (and thus $\KK$) independently,
	we will just assume that $G$ has one connected component. This means that $T$ is a spanning
	tree of $G$.
	
	Fix any node $r\in V(G)$ and take the point $p_r\in K_r$ as a basepoint. 
	For each node $v\in V(T)$, let $T[r,v]$ be the simple walk in $T$ from $r$ to $v$.
	For each edge $uv$ of $G$ we define a closed polygonal curve $\lambda(uv)$ 
	as the concatenation of $\gamma(T[r,u])$, $\gamma(uv)$, and the reverse of $\gamma(T[r,v])$.
	Note that $\lambda(uv)$ is a closed polygonal path through $p_r$.
	
	When $uv\notin E(T)$, $\lambda(uv)$ is $\tau(T,uv)$ with a spoke following $\gamma(T[r,w])$,
	where $w$ is the last common node of $T[r,u]$ and $T[r,v]$.
	This implies that 
	\begin{equation}
		\forall uv \in E(G)\setminus E(T): ~~~
			\CR(\ell,\tau(T,uv)) = \CR(\ell,\lambda(uv)).
		\label{eq:noT}
	\end{equation}
	When $uv\in E(T)$, then $\lambda(uv)$ walks twice the same polygonal curve,
	and therefore 
	\begin{equation}
		\forall uv \in E(T): ~~~
			\CR(\ell,\lambda(uv))=0.
		\label{eq:T}
	\end{equation}
	
	Assume that the points $s$ and $t$ lie in different path-components of $\bar\KK$.
	Because of Lemma~\ref{lem:nerve}, there exists some closed walk $W$ in $G$ 
	with $\CR(\ell,\gamma(W))=1$.
	Let $u_1v_1,\dots,u_kv_k$ be the sequence of edges in $W$, where $u_1=v_k$. 
	Using arithmetic modulo $2$ we have  
	\begin{align*}
		1~&=~ \CR (\ell, \gamma(W)) \\
		&=~ \sum_{i=1}^k \CR(\ell,\gamma(u_iv_i))\\
		&=~ \sum_{i=1}^k \Bigl( \CR(\ell,\gamma(T[r,u_i])+ \CR(\ell,\gamma(u_iv_i) + \CR(\ell,\gamma(T[r,v_i]) \Bigr)\\
		&=~ \sum_{i=1}^k \CR(\ell,\lambda(u_iv_i)),
	\end{align*}
	where in the third equality we have used that each node of $W$ 
	is the endpoint of two consecutive edges of $W$, which implies that the new terms
	cancel out.
	This means that, for some edge $u_iv_i$ of $W$, we have
	$\CR(\ell,\lambda(u_iv_i))=1$. This edge $u_iv_i$ cannot be in $T$
	because of \eqref{eq:T}.
	Therefore we have some edge $u_iv_i$ in $E(W)$, where $u_iv_i\notin E(T)$,
	with $\CR(\ell,\lambda(u_iv_i))=1$. Because of \eqref{eq:noT} we have
	$\CR(\ell,\tau(T,u_iv_i))= \CR(\ell,\lambda(u_iv_i))=1$.
	This finishes the proof of one direction of the statement.
	
	For the other direction, assume that there exists some edge $e\in E(G)\setminus E(T)$
	such that $\CR(\ell,\tau(T,e))=1$. Taking $W=\cycle(T,e)$ and using
	that $\tau(T,e)=\gamma(W)$ by definition,
	this means that $W$ is a closed walk in $G$ with $\CR(\ell,\gamma(W))=1$.
	It follows from Lemma~\ref{lem:nerve} that $\KK$ separates $s$ and $t$.	
\end{proof}

\section{Semi-dynamic connectivity}
\label{sec:dynamic}

In this section we discuss the separation of $s$ and $t$ 
under the addition of new sets to $\calK$.
We first describe a standard union-find data structure because we will build on it.
Then we describe the setting and the notation we will use.
It follows a description of the extension of the union-find data structure
for our setting. Finally, we describe the data structure, its maintenance,
and its correctness.

\subsection{Preliminaries: Union-find}

Here we review a standard union-find data structure and some of its properties.
See~\cite[Chapter 21]{cormen}, \cite[Chapter 5]{dpv} or~\cite{e-14} for a comprehensive exposition.

A \DEF{union-find data structure} represents a disjoint set system supporting
the operations \makeset\ (create a new disjoint set with a single element), \union\
(merge two sets), and \find\ (return a representative of a given set). 
We can test whether two elements belong two the same set by testing whether the output
of \find\ for those two elements is the same.
A common realization
is to represent each disjoint set by a rooted tree in which each node holds one element of the set.
The root of the tree holds the representative of the set. Each node has a pointer to its parent,
while the root points to itself.
Then \find\ simply follows the parent pointer until it finds the root of the tree.
The union operation merges two trees by making the root of one subtree a child of the root of the other.
Thus, given two elements, we first locate the roots of their corresponding trees calling \find,
and then we proceed with the union.

Two optimizations are commonly used to obtain an efficient realization. 
\emph{Union-by-rank} determines which root gets merged in a union operation: 
each root has a rank associated to it, in an union we simply make the root of lower rank a child
of the root with larger rank, and we increase the rank of the root if both roots had the same rank. 
\emph{Path compression} makes all nodes found on a search path 
from a node to its root direct children of the root. 
For later reference and modification, 
we include pseudocode in Figure~\ref{fig:code1}.
Combining these two optimizations, each operation has an amortized time
complexity of $\alpha(n)$, where $n$ is the number of elements in the set system and
$\alpha(\cdot)$ is the extremely slow growing inverse Ackermann function.
See references~\cite[Chapter 21]{cormen}, \cite{e-14} or~\cite{ss-05} 
for an analysis of the time complexity.

\begin{figure}[htb]
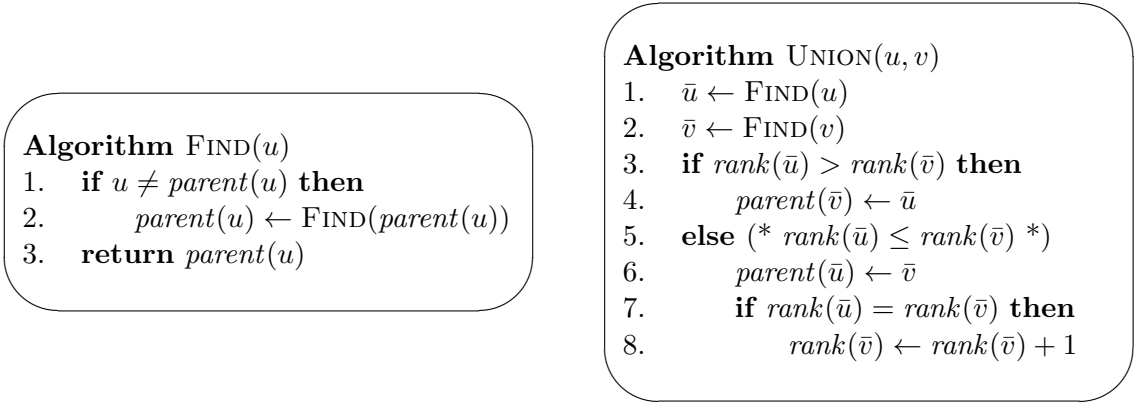

	\hfill
	\ovalbox{
	\parbox{6.6cm}
	{\begin{algorithm}{\find}{$(u)$}
		\qif $u \not= \parent(u)$ \qthen\\
			$\parent(u) \qlet \find (\parent(u))$
		\qfi\\
		\qreturn $\parent(u)$
	\end{algorithm}}}
	\hfill
	\ovalbox{
	\parbox{6.6cm}
	{\begin{algorithm}{\union}{$(u,v)$}
		$\bar u\qlet \find(u)$\\
		$\bar v\qlet \find(v)$\\
		\qif $\rank(\bar u)> \rank (\bar v)$ \qthen\\
			$\parent(\bar v) \qlet \bar u$
		\qelse (* $\rank(\bar u)\le \rank (\bar v)$ *)\\
			$\parent(\bar u) \qlet \bar v$\\
			\qif $\rank(\bar u) = \rank (\bar v)$ \qthen\\
				$\rank(\bar v) \qlet \rank (\bar v)+1$
			\qfi
		\qfi
	\end{algorithm}}}
	\hfill
	\caption{The main two operations in the union-find data structure. 
			$u$ and $v$ are nodes of the tree.}
	\label{fig:code1}
\end{figure}

\subsection{Setting}
\label{sec:setting}
Let $s$ and $t$ be two points in the plane.
We have a finite family of convex sets $\calK$, all of them disjoint from $s$ and $t$.
Following the previous notation, we denote by $\KK$ the union of the sets in $\calK$, 
and by $G$ the intersection graph of $\calK$.

Consider the addition of a new compact convex set $K_u$ to $\calK$. 
We use $\calKnew$ for the resulting set, $\KKnew$ for the union of its sets,
and $\Gnew$ for the intersection graph of $\calKnew$.
 
The analysis of our data structure is based on a maximal spanning forest 
of the intersection graph of the convex sets.
The definition of such spanning forest is iterative, as follows.
Let $uv_1,\dots, uv_k$ be an enumeration of the edges incident to $u$ in $\Gnew$.
That is, $K_{v_1},\dots,K_{v_k}$ are the sets of $\calK$ intersecting the new set $K_u$.
We consider adding the edges $uv_1,\dots, uv_k$ to $G$ one by one.
We thus define $G_0$ as the union of $G$ and a new vertex $u$ for $K_u$.
For each index $1\le j\le k$, we define the graph $G_j=G_{j-1}+ uv_j$.
Note that $\Gnew=G_k$. The intermediate graphs $G_1,\dots,G_{k-1}$ are not
intersection graphs of $\calK$ or $\calKnew$, but something in between.

If at the time of adding $uv_j$ the vertices $u$ and $v_j$ are already connected 
in the graph $G_{j-1}$, then we call $uv_j$ a \DEF{cycle edge}.
Otherwise, $uv_j$ merges two connected components of $G_{j-1}$
and we call it a \DEF{merge edge}.
Note that whether an edge is a cycle edge or a merge edge depends on the order used in the addition of edges.

Let $T$ be the maximal spanning forest of $G$.
We define $T_0$ as the union of $T$ and a new vertex $u$ for $K_u$. 
For each index $1\le j\le k$ we define
\[
	T_j~=~ \begin{cases}
			T_{j-1} &\text{if $uv_j$ is a cycle edge,}\\
			T_{j-1}+uv_j & \text{if $uv_j$ is a merge edge.}
		\end{cases}
\]
It is easy to see by induction that, for each index $1\le j\le k$,
$T_j$ is a maximal spanning forest of $G_j$.
We define $\Tnew$ as $T_k$. Thus $\Tnew$ is a maximal spanning
forest of $\Gnew=G_k$.

As it was done in Section~\ref{sec:intersection},
for each $K_u$ we choose a point $p_u$ in $K_u$ and for
each edge $uv$ we choose a polygonal curve $\gamma(uv)$.
These choices are made in the first appearance of the node or edge,
and remain invariant from there onwards.

\subsection{Augmented union-find}
\label{sec:extended}

We maintain a union-find data structure for the connected components 
of the graphs $G_j$.
Recall that $T_j$ is a maximal spanning forest of $G_j$.
For each node $v$ of $G_j$, we store a \DEF{parity bit}, denoted as $\parity(v)$, 
with the following property:
\begin{itemize}
	\item If $v$ is the root of a union-find tree, then $\parity(v)=0$.
	\item If $v$ has parent $w$ in a union-find tree, 
		then $\parity(v)= \CR(\ell, T_j[w,v])$. That is,
		we look at the parity of the crossing number of $\ell$ with
		the polygonal curve from $p_v$ to $p_w$ defined by the drawing of $T_j$.
\end{itemize}
For the rest of the paper, \emph{any arithmetic involving parity bits is done modulo 2.}

We next argue that the correct parity bits can be maintained in the same complexity as the union-find operations, assuming that only certain unions are made.
That is clear for \makeset\ by giving the new node parity $0$. 

Consider the \find\ operation, which changes parent pointers due to path compression. 
Note that the graphs $G_j$ and $T_j$ do not change, but the union-find data structure does.
Let $u,v,w$ be nodes such that, in the union-find data structure,
$w$ is parent of $v$ and $v$ is parent of $u$.
Note that 
\[
		\CR(\ell, \gamma(T_j[u,w])) ~=~
		\CR(\ell, \gamma(T_j[u,v])) + \CR(\ell, \gamma(T_j[v,w])) ~=~
		\parity(u) + \parity(v).
\]
Therefore, when we update $\parent(u)\qlet w$,
we just have to set $\parity(u) \qlet \parity(u) + \parity(v)$
to restore $\parity(u)$ to its correct value.

We can now easily realize the augmented path compression.
We define an extended function $\findext(u)$ that, for 
all nodes $v$ from $u$ to the root $r$ of the tree containing $u$,
sets $\parent(u)=r$ and updates the value $\parity(v)$ accordingly. 
Pseudocode is given in Figure~\ref{fig:code2}.
It easily follows by induction that \findext\ correctly maintains 
the parity bit of all elements.

\begin{figure}[htb]
	\centering
	\ovalbox{
	\parbox{9cm}
	{\vspace{.4cm}\begin{algorithm}{\findext}{$(u)$}
		\qif $u \not= \parent(u)$ \qthen\\
			$r \qlet \find (\parent(u))$\\
			$\parity(u) \qlet \parity(u) + \parity(\parent(u))$\\
			$\parent(u)\qlet r$
		\qfi\\
		\qreturn $\parent(u)$
	\end{algorithm}}}
	\caption{Extended find operation for an element $u$.}
	\label{fig:code2}
\end{figure}

Finally, we discuss the extension \union\ to \unionext. 
Its arguments are two nodes $u$ and $v_j$ such that
$uv_j$ is a merge edge and the union-find data structure stores the
connectivity of $G_{j-1}$. Since $uv_j$ is a merge edge,
we have $T_j=T_{j-1}+uv_j$.
This means that the sets $K_u$ and $K_{v_j}$ intersect but 
$u$ and $v_j$ were in different connected components of $G_{j-1}$.
Like before, we first find the roots $\bar u$ and $\bar v$ of their trees 
using $\findext(\cdot)$. After this it holds that
$\parity(u)= \CR(\ell, \gamma(T_{j-1}[\bar u,u]))$, and similarly
$\parity(v)= \CR(\ell, \gamma(T_{j-1}[\bar v,v]))$.

The walk $T_{j}[\bar u, \bar v]$
can be split into $T_{j-1}[\bar u, u]$, $uv$, and $T_{j-1}[v, \bar v]$.
Thus we have
\begin{align*}
		\CR(\ell, \gamma(T_j[\bar u,\bar v])) ~&=~ 
			\CR(\ell, \gamma(T_{j-1}[u,\bar u])) + \CR(\ell, \gamma(uv)) 
			+ \CR(\ell, \gamma(T_{j-1}[v,\bar v])) \\
			&=~ \parity(u) + \CR(\ell, \gamma(uv)) + \parity (v).
\end{align*}
The last values are either available through $\parity(\cdot)$ 
or computable in constant time. 
If, for example, $\bar u$ gets $\bar v$ as its parent, then
we have $\parity(\bar u)=\CR(\ell, \gamma(T_{j}[u,v]))$. The other case
is similar. We provide the resulting pseudocode in Figure~\ref{fig:code3}.

\begin{figure}[htb]
	\centering
	\ovalbox{
	\parbox{8cm}
	{\vspace{.4cm}\begin{algorithm}{\unionext}{$(u,v)$}
		$\bar u\qlet \findext(u)$\\
		$\bar v\qlet \findext(v)$\\
		$b\qlet \parity(u)+\parity(v)+ \CR(\ell, \gamma(uv))$\\
		\qif $\rank(\bar u)> \rank (\bar v)$ \qthen\\
			$\parent(\bar v) \qlet \bar u$\\
			$\parity(\bar v) \qlet b$
		\qelse (* $\rank(\bar u)\le \rank (\bar v)$ *)\\
			$\parent(\bar u) \qlet \bar v$\\
			$\parity(\bar u) \qlet b$\\
			\qif $\rank(\bar u) = \rank (\bar v)$ \qthen\\
				$\rank(\bar v) = \rank (\bar v)+1$
			\qfi
		\qfi
	\end{algorithm}}}
	\caption{Extended union for two nodes $u$ and $v$.}
	\label{fig:code3}
\end{figure}

The properties of union-find imply that each of the extended operations,
\unionext\ and \findext, has an amortized
complexity of $\alpha(n)$, where $n$ is the cardinality of $\calK$.

\subsection{Semi-dynamic data structure}

We now describe the data structure to maintain $\calK$.
The data structure supports one operation: add a new
compact convex set $K_u$ to $\calK$ and then 
report whether $\calK\cup \{ K_u\}$ separates $s$ and $t$.
We use the notation from Sections~\ref{sec:setting} and~\ref{sec:extended}.

The data structure has the following elements:
\begin{itemize}
	\item an augmented union-find data structure as described in Section~\ref{sec:extended};
	\item for each element $K_v$ of $\calK$, we store the point $p_v$;
	\item a semi-dynamic data structure $DS(\calK)$ that can find, for the new $K_u$, all the 
		objects of $\calK$ that intersect $K_u$.
\end{itemize}
The intersection graph $G$ and the maximal spanning forest $T$ are not kept.
They are used only for the analysis. 

We next describe how to insert $K_u$.
We use the data structure $DS(\calK)$ to find the sets $K_{v_1},\dots,K_{v_k}$ of $\calK$
that intersect $K_u$. We then insert $K_u$ in the data structure $DS(\calK)$ to obtain 
$DS(\calKnew)$. We choose a point $p_u$ in $K_u$
and create a new node $u$ in the extended union-find data structure.

We then iterate over the edges $uv_1,\dots, uv_k$. 
We first decide whether the considered edge $uv_j$ is a merge edge or a cycle edge 
by checking whether 
$\findext(u)$ and 
$\findext(v_j)$ return the same representative.
If $uv_j$ is a merge edge, we just call $\unionext(u,v_j)$ and continue with the next step
of the filtration.

Otherwise, $uv_j$ is a cycle edge, and we proceed as follows. 
We want to check whether $\CR(\ell,\tau(T_{j},uv_j))=\CR(\ell,\tau(T_{j-1},uv_j))$ is $1$ or $0$.
For this, we use that $u$ and $v$ have already the same parent because of the calls 
$\findext(u)$ and $\findext(v)$.
If we denote such a common parent by $r$, then
\begin{align*}
	\CR(\ell,\tau(T_j,uv_j)) ~&=~ \CR(\ell,\gamma(T_{j-1}[u,v_j])) + \CR(\ell,\gamma(uv_j)) \\
			&=~ \CR(\ell,\gamma(T_{j-1}[u,r])) + \CR(\ell,\gamma(T_{j-1}[v_j,r])) + \CR(\ell,\gamma(uv_j)) \\
			&=~ \parity(u) + \parity(v_j) + \CR(\ell,\gamma(uv_j)).
\end{align*}
If $\CR(\ell,\tau(T_j,uv_j))=1$, then we conclude 
that $\KKnew$ separates $s$ and $t$ and we finish the algorithm.
If $\CR(\ell,\tau(T_j,uv_j))=0$, we proceed to the next edge $uv_{j+1}$.
Pseudocode for the insertion of $K_u$ is given in Figure~\ref{fig:code4}.
This finishes the description of the algorithm.

\begin{figure}[htb]
	\centering
	\ovalbox{
	\parbox{10.5cm}
	{\vspace{.4cm}\begin{algorithm}{Adding $K_u$ to $\cal K$}{}
		$\makeset(u)$\\
		$\parity(u)\qlet 0$\\
		\qfor $K_v\in \calK$ intersecting $K_u$ \qdo\\
			$\bar u\qlet \findext(u)$\\
			$\bar v\qlet \findext(v)$\\
			\qif $\bar u \not= \bar v$ \qthen\\
				$\unionext(u, v)$
			\qelse (* $uv$ a cycle edge *)\\
				$b\qlet \parity(u)+\parity(v)+ \CR(\ell, \gamma(uv))$\\
				\qif $b=1$ \qthen\\
					\qreturn ``$\calK\cup \{ K_u\}$ separates $s$ and $t$!!"
				\qfi
			\qfi
		\qrof
	\end{algorithm}}}
	\caption{Procedure for the addition of $K_u$.}
	\label{fig:code4}
\end{figure}

It follows from the invariants of the extended union-find discussed in
Section~\ref{sec:extended}, that we are correctly computing the value
$\CR(\ell,\tau(T_j,uv_j))$. 
If $\CR(\ell,\tau(T_j,uv_j))=1$, then Lemma~\ref{lem:homology_condition}
implies that $\KKnew$ separates $s$ and $t$. From that point on,
we only need to remember that $s$ and $t$ are separated.

If $\CR(\ell,\tau(T_j,uv_j))=0$, then $\CR(\ell,\tau(T,uv_j))$ will
remain $0$ for all future maximal spanning forests $T$.
This is so because the maximal spanning forest we maintain is monotone increasing:
we only add vertices and edges, but never remove anything.
Thus, we never need to check $\CR(\ell,\tau(T,uv_j))$ again later.
In particular, if $\calK$ did not separate $s$ and $t$ and 
we have $\CR(\ell,\tau(T_j,uv_j))=0$ for all $j$,
then
\[
	\forall vv'\in E(\Gnew)\setminus E(\Tnew):~~~ \CR(\ell,\tau(\Tnew,vv'))=0.
\]
Since $\Tnew$ is a maximal spanning forest of $\Gnew$,
Lemma~\ref{lem:homology_condition} implies that $\KKnew$ 
does not separate $s$ and $t$. 

For each edge $uv_j$ we make 2 calls to \findext, at most one call to \unionext,
and additional $O(1)$ work. This means that for each edge we spend
$O(\alpha(n))$ amortized time, where $n$ is the cardinality of $\calK$.
We also need the time needed to find the elements of $\calK$ intersecting the new
element $K_u$. We conclude.

\begin{theorem}
\label{thm:main}
	Let $s$ and $t$ be two points in the plane.
	There is a semi-dynamic data structure to maintain a family $\calK$ of $n$ compact convex
	sets in the plane under insertions to decide whether $\calK$ separates $s$ can $t$.
	The insertion of a new set $K_u$ in $\calK$ that intersects $k$ sets of $\calK$ 
	takes $O(1+k \alpha(n))$ amortized time, plus the time needed to find
	the $k$ elements of $\calK$ intersecting $K_u$.
\end{theorem}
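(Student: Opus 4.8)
The plan is to assemble the ingredients developed in Sections~\ref{sec:setting} and~\ref{sec:extended} and to invoke the combinatorial criterion of Lemma~\ref{lem:homology_condition}. The correctness argument splits into verifying that the algorithm of Figure~\ref{fig:code4} computes the right quantity and then matching its outcome against that criterion; the complexity is a direct bookkeeping over the union-find calls.

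First I would establish the central invariant at the moment an edge $uv_j$ is processed: after the calls $\findext(u)$ and $\findext(v_j)$, both $u$ and $v_j$ hang directly off the common root $r$ of their union-find tree, and by the parity invariant of Section~\ref{sec:extended} one has $\parity(u)=\CR(\ell,\gamma(T_{j-1}[u,r]))$ and $\parity(v_j)=\CR(\ell,\gamma(T_{j-1}[v_j,r]))$. Hence the bit $b=\parity(u)+\parity(v_j)+\CR(\ell,\gamma(uv_j))$ computed by the algorithm equals $\CR(\ell,\tau(T_j,uv_j))$, exactly the quantity appearing in Lemma~\ref{lem:homology_condition}. I would note that this identity relies on the splitting rule $\CR(\ell,\pi)=\CR(\ell,\pi')+\CR(\ell,\pi'')$ for concatenations, which is valid because $s'$ is a single fixed perturbation.

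Second, for the separation direction, suppose the algorithm reports separation at a cycle edge $uv_j$ with $b=1$. Since $uv_j$ is a cycle edge, $u$ and $v_j$ are already connected in $G_{j-1}$, so $uv_j\notin E(\Tnew)$; moreover, as the maintained forest only grows, the tree path between $u$ and $v_j$ is the same in $T_{j-1}$ and in $\Tnew$, whence $\tau(\Tnew,uv_j)=\tau(T_j,uv_j)$ and $\CR(\ell,\tau(\Tnew,uv_j))=1$. Lemma~\ref{lem:homology_condition}, applied to $\Gnew$ with its maximal spanning forest $\Tnew$, then yields that $\KKnew$ separates $s$ and $t$. For the converse, I would use monotonicity: if no cycle edge ever produced $b=1$ and $\calK$ did not previously separate $s$ and $t$, then the crossing parity of every fundamental cycle is $0$, and since these parities never change as the forest grows we obtain $\CR(\ell,\tau(\Tnew,vv'))=0$ for all $vv'\in E(\Gnew)\setminus E(\Tnew)$; Lemma~\ref{lem:homology_condition} then certifies non-separation.

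Finally, for the running time, processing each of the $k$ edges incident to $K_u$ costs two $\findext$ calls, at most one $\unionext$ call, and $O(1)$ additional work, where each union-find operation runs in amortized $\alpha(n)$ time; together with the single $\makeset$ this gives $O(1+k\alpha(n))$ amortized time, to which we add the query time of the auxiliary structure reporting the $k$ intersecting sets. The step I expect to require the most care is the monotonicity argument in the converse direction: one must be certain that a cycle edge deemed to have $\CR=0$ at insertion can safely never be revisited, i.e. that its fundamental-cycle parity is genuinely invariant under all later growth of the spanning forest, so that a single pass over the edges incident to $K_u$ suffices.
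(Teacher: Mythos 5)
Your proposal is correct and follows essentially the same route as the paper: the same augmented union-find invariant showing that the computed bit $b$ equals $\CR(\ell,\tau(T_j,uv_j))$, the same appeal to Lemma~\ref{lem:homology_condition} in both directions via the monotonicity of the growing spanning forest, and the same accounting of two \findext{} calls, at most one \unionext{} call, and $O(1)$ extra work per incident edge. Your explicit observation that a cycle edge's fundamental-cycle parity is preserved because tree paths within a component never change as the forest grows is exactly the point the paper relies on (stated there for the converse direction), so nothing is missing.
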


Of course, once $s$ and $t$ are separated by $\calK$, the insertion of each new
set can be carried out in constant time, since we only need to remember that
$\calK$ separates $s$ and $t$.

\section{Application to dynamic connectivity under subdivision}
\label{sec:subdivision}

We consider now the motivating application discussed in the Introduction for $d=2$.

We have two points $s$ and $t$ inside the unit square $\configspace$.
Initially, the box $\configspace$ is colored yellow.
In each iteration, we take a \emph{largest} yellow box, subdivide it into
4 subboxes, and color each of them as red, yellow, or green 
depending on the outcome of some oracle.
The boxes containing $s$ or $t$ are always colored yellow or green.
We want to know at which point the red boxes separate $s$ and $t$, meaning
that each path from $s$ to $t$ contained in the unit square 
intersects some red box. 

Boxes are assumed to contain their boundary, 
so that any two boxes intersect if their boundaries intersect, 
possibly only at a common vertex.

For our arguments it is convenient to surround $\configspace$ with 8 red boxes of the same size
as $\configspace$. This reduces the problem to finding certain
curves within the red region. Without those additional squares, we should also consider 
boundary-to-boundary curves.

We maintain through the algorithm the intersection graph $H$ of the yellow and red boxes.
This intersection graph $H$ has one node for each box that is yellow or red,
and an edge between two nodes whenever the corresponding boxes intersect.
The graph $H$ is stored using an adjacency list representation~\cite[Chapter 22]{cormen}.
The adjacency list of each vertex is stored as a doubly linked list.
Moreover, for the appearance of a node $v$ in the adjacency list of $u$,
we keep a pointer to the appearance of $u$ in the adjacency list of $v$.
With this, we can perform the deletion of a node $v$ in time proportional to its
degree.

When we want to subdivide a yellow box $K_u$ represented by a node $u$, 
we can locate its set of neighbors $N=N_H(u)$ in the graph $H$, delete $u$ from
the graph, subdivide $K_u$ into four boxes, create the at most four new nodes
representing the yellow and red boxes arising from the subdivision of $K_u$,
check for intersection each of them against each of the nodes in $N$, 
and update the graph $H$ accordingly. 
All this takes time $O(1+|N|)$ time.

If we always subdivide a largest yellow box, there are at most $12$ 
other boxes intersecting it. This means that we can update the 
intersection graph $H$ of yellow and red boxes in $O(1)$ time.
For choosing always a largest yellow box, we can use for example a queue
for the yellow boxes.
Thus, we spend $O(1)$ time per subdivided yellow box and,
for each red box, we get its neighboring red boxes in $O(1)$ time.
Using Theorem~\ref{thm:main} for the red boxes, and a normal union-find
for the green boxes, as discussed in the Introduction,
we obtain the following result.

\begin{theorem}
	Consider the subdivision procedure described in the Introduction where
	we always subdivide a largest yellow box.
	We can perform the subdivision until condition (1) or (2) occurs in $O(n\alpha (n))$ time,
	where $n$ is the number of subdivisions performed.	
\end{theorem}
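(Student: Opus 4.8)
The plan is to establish the time bound by a single global accounting argument: charge all work to the $O(n)$ boxes that are ever created, and invoke the amortized analysis of union-find over the entire operation sequence rather than per step. First I would bound the number of objects. Since each of the $n$ subdivisions replaces one yellow box by four boxes, the total number of boxes ever created is at most $4n+1$; in particular the number of red boxes and the number of green boxes are each $O(n)$. Consequently the augmented union-find structure of Section~\ref{sec:extended} (maintained for the red boxes) and an ordinary union-find structure (maintained for the green boxes, as in the Introduction) each have a universe of $O(n)$ elements, so every \findext, \unionext, \find\ and \union\ call has amortized cost $O(\alpha(n))$, using monotonicity of $\alpha$ and $\alpha(4n+1)=O(\alpha(n))$.

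Next I would bound the work of one subdivision step. To select a largest yellow box in $O(1)$ time I would keep the yellow boxes in a queue: a box of depth $d$ only spawns boxes of depth $d+1$, so processing the queue in first-in, first-out order dequeues boxes in nondecreasing order of depth, hence always a box of minimum current depth, i.e.\ a largest yellow box (the queue content is exactly the current set of yellow boxes, each enqueued at creation and dequeued when subdivided). Under the standing assumption that we always subdivide a largest yellow box, the subdivided box meets at most $12$ other boxes, so locating its neighbourhood $N=N_H(u)$, deleting $u$, creating the at most four new red/yellow nodes, testing each against $N$, and updating $H$ all take $O(1+|N|)=O(1)$ time, exactly as argued before the statement. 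For each of the at most four new boxes I would then proceed by colour: a yellow box is enqueued in $O(1)$; a green box triggers the $O(1)$ union-find operations merging it with its green neighbours (read off from $N$ in $O(1)$), costing $O(\alpha(n))$ amortized; and a red box is inserted into the augmented structure via Theorem~\ref{thm:main}. In the red case the number $k$ of already-present red boxes intersecting the new one is at most $12$, hence $O(1)$, and these red neighbours are available from $N$ in $O(1)$ time, so Theorem~\ref{thm:main} gives an insertion cost of $O(1+k\alpha(n))=O(\alpha(n))$ amortized.

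Summing, each subdivision contributes $O(\alpha(n))$ amortized time and there are $n$ of them, so the total is $O(n\alpha(n))$; correctness of the reported stopping condition is inherited directly from Theorem~\ref{thm:main} for the red boxes and from the standard union-find test for the green boxes, so nothing beyond the time bound needs proving here. The one point requiring care---and which I regard as the \emph{crux} of the accounting---is that the per-step $O(\alpha(n))$ figure is amortized, so instead of literally multiplying a worst-case per-step cost by $n$, I would observe that the whole procedure issues only $O(n)$ union-find operations (a constant number per created box) on structures of size $O(n)$, and then appeal to the union-find amortized bound to charge all of them together as $O(n\alpha(n))$; the remaining $O(n)$ of bookkeeping---queue operations, graph updates, and neighbour tests, each $O(1)$ per subdivision---is dominated by this. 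The geometric fact underpinning the entire argument is the $2{:}1$ balance of the quadtree forced by largest-first subdivision, which is precisely what guarantees the $O(1)$ neighbour bound and hence $k=O(1)$ in the invocation of Theorem~\ref{thm:main}.
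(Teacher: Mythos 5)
Your proposal is correct and follows essentially the same route as the paper: a queue to select a largest yellow box, the at-most-$12$-neighbor bound from largest-first subdivision used both to update the intersection graph in $O(1)$ time per step and to guarantee $k=O(1)$ in the invocation of Theorem~\ref{thm:main}, plus an ordinary union-find for the green boxes, summing to $O(n\alpha(n))$. You additionally spell out details the paper leaves implicit (that FIFO order realizes largest-first, the $O(n)$ bound on the number of boxes, and the global amortized accounting of the union-find operations), which only strengthens the argument.
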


Of course we can also perform the first $n$ steps of the subdivision procedure
in $O(n\alpha (n))$ time, and correctly report that neither condition (1) nor (2) hold.

\paragraph{Acknowledgments} We thank Chee Yap for posing to us the problem about connectivity under subdivisions.


\begin{thebibliography}{10}

\bibitem{aizawa}
K.~Aizawa, S.~Tanaka, K.~Motomura, and R.~Kadowaki.
\newblock Algorithms for connected component labeling based on quadtrees.
\newblock {\em International Journal of Imaging Systems and Technology}
  19(2):158--166, June 2009, \href{http://dx.doi.org/10.1002/ima.v19:2}%
{doi:10.1002/ima.v19:2}, \url{http://dx.doi.org/10.1002/ima.v19:2}.

\bibitem{dutchbook}
M.~de~Berg, M.~van Kreveld, M.~Overmars, and O.~Schwarzkopf.
\newblock {\em Computational Geometry: Algorithms and Applications}.
\newblock Springer, 2nd edition, 2000.

\bibitem{cg-15}
S.~Cabello and P.~Giannopoulos.
\newblock The complexity of separating points in the plane.
\newblock {\em Algorithmica}, to appear,
  \href{http://dx.doi.org/10.1007/s00453-014-9965-6}%
{doi:10.1007/s00453-014-9965-6}.

\bibitem{cormen}
T.~H. Cormen, C.~E. Leiverson, R.~L. Rivest, and C.~Stein.
\newblock {\em Introduction to Algorithms}.
\newblock MIT Press, 3rd edition, 2009.

\bibitem{dpv}
S.~Dasgupta, C.~H. Papadimitriou, and U.~V. Vazirani.
\newblock {\em Algorithms}.
\newblock McGraw-Hill, 2008.

\bibitem{sos}
H.~Edelsbrunner and E.~P. M{\"{u}}cke.
\newblock Simulation of simplicity: a technique to cope with degenerate cases
  in geometric algorithms.
\newblock {\em {ACM} Transactions on Graphics} 9(1):66--104, 1990,
  \href{http://dx.doi.org/10.1145/77635.77639}%
{doi:10.1145/77635.77639}.

\bibitem{e-14}
J.~Erickson.
\newblock Algorithms notes: Maintaining disjoint sets (``union-find"), 2015.
\newblock Lecture nodes available at
  \url{http://web.engr.illinois.edu/~jeffe/teaching/algorithms/}.

\bibitem{lavalle}
S.~M. Lavalle.
\newblock {\em Planning Algorithms}.
\newblock Cambridge University Press, 2006.

\bibitem{mt-01}
B.~Mohar and C.~Thomassen.
\newblock {\em Graphs on Surfaces}.
\newblock Johns Hopkins University Press, 2001.

\bibitem{ss-05}
R.~Seidel and M.~Sharir.
\newblock Top-down analysis of path compression.
\newblock {\em {SIAM} Journal of Computing} 34(3):515--525, 2005,
  \href{http://dx.doi.org/10.1137/S0097539703439088}%
{doi:10.1137/S0097539703439088}.

\bibitem{tarjan}
R.~E. Tarjan.
\newblock Efficiency of a good but not linear set union algorithm.
\newblock {\em Journal of the ACM} 22(2):215--225, 1975,
  \href{http://dx.doi.org/10.1145/321879.321884}%
{doi:10.1145/321879.321884}.

\bibitem{wcy-soft}
C.~Wang, Y.-J. Chiang, and C.~Yap.
\newblock On soft predicates in subdivision motion planning.
\newblock {\em Proceedings of the Twenty-ninth Annual Symposium on
  Computational Geometry}, pp.~349--358. ACM, SoCG '13, 2013,
  \href{http://dx.doi.org/10.1145/2462356.2462386}%
{doi:10.1145/2462356.2462386}.

\end{thebibliography}
\end{document}